\crefname{claim}{claim}{claims}
\Crefname{claim}{Claim}{Claims}
\setlist[itemize]{leftmargin=2em,labelsep=.6em,topsep=\medskipamount,noitemsep}
\setlist[enumerate]{leftmargin=2em,labelsep=.5em,topsep=\medskipamount,noitemsep}
\newtheorem{theorem}{Theorem}[section]
\newtheorem{lemma}[theorem]{Lemma}
\newtheorem{corollary}[theorem]{Corollary}
\newtheorem{question}[theorem]{Question}
\newtheorem{claim}[theorem]{Claim}
\newdimen\@savedtopsep
\renewenvironment{proof}[1][\proofname]{\par\@savedtopsep\topsep
\pushQED{\qed}\normalfont\topsep6\p@\@plus6\p@\trivlist
\item[\hskip\labelsep\itshape #1\@addpunct{.}]\topsep\@savedtopsep
\ignorespaces}{\popQED\endtrivlist\@endpefalse}
\crefname{problem}{Problem}{Problems}
\crefname{claim}{Claim}{Claims}
\newcommand{\Oh}{\mathcal{O}}
\newcommand{\cT}{\mathcal{T}}
\newcommand{\C}{\mathsf{C}}
\newcommand{\K}{\mathsf{K}}
\newcommand{\M}{\mathsf{M}}
\newcommand{\N}{\mathsf{N}}
\DeclareMathOperator{\yolov}{tree\textnormal{-}\mu}
\DeclareMathOperator{\treealpha}{tree\textnormal{-}\alpha}
\DeclarePairedDelimiter{\size}{\lvert}{\rvert}
\let\leq\leqslant\let\le\leq
\let\geq\geqslant\let\ge\geq
\let\setminus\smallsetminus
\title{Induced matching treewidth and tree-independence number, revisited}
\author{Noga Alon\thanks{ Princeton University, Princeton, NJ, USA (\texttt{nalon@math.princeton.edu}).
Supported in part by NSF grant DMS-2154082.}
\and
Martin Milanič\thanks{FAMNIT and IAM, University of Primorska, Slovenia (\texttt{martin.milanic@upr.si}).
Supported in part by the Slovenian Research and Innovation Agency (I0-0035, research program P1-0285 and research projects J1-3003, J1-4008, J1-4084, J1-60012, and N1-0370) and by the research program CogniCom (0013103) at the University of Primorska.}
\and Paweł Rzążewski\thanks{Warsaw University of Technology \& University of Warsaw, Poland (\texttt{pawel.rzazewski@pw.edu.pl}).
Supported by the National Science Centre grant number 2024/54/E/ST6/00094.}}
\begin{document}

\maketitle

\begin{abstract}
We study two graph parameters defined via tree decompositions: tree-independence number and induced matching treewidth.
Both parameters are defined similarly as treewidth, but with respect to different measures of a tree decomposition $\mathcal{T}$ of a graph $G$: for tree-independence number, the measure is the maximum size of an independent set in $G$ included in some bag of $\mathcal{T}$, while for the induced matching treewidth, the measure is the maximum size of an induced matching in $G$ such that some bag of $\mathcal{T}$ contains at least one endpoint of every edge of the matching.

While the induced matching treewidth of any graph is bounded from above by its tree-independence number, the family of complete bipartite graphs shows that small induced matching treewidth does not imply small tree-independence number.
On the other hand, Abrishami, Bria\'nski, Czy\.zewska, McCarty, Milani\v{c}, Rz\k{a}\.zewski, and Walczak~[SIAM Journal on Discrete Mathematics, 2025] showed that, if a fixed biclique $K_{t,t}$ is excluded as an induced subgraph, then the tree-independence number is bounded from above by some function of the induced matching treewidth.
The function resulting from their proof is exponential even for fixed $t$, as it relies on multiple applications of Ramsey's theorem.
In this note we show, using the K\"ov\'ari-S\'os-Tur\'an theorem, that for any class of $K_{t,t}$-free graphs, the two parameters are in fact polynomially related.
\end{abstract}

\section{Introduction}

Treewidth is a graph parameter that, roughly speaking, measures how similar the graph is to a tree.
The notion of treewidth was introduced independently several times and in different contexts (see~\cite{ACP87,BB72,Halin76,RS84}) and has been an important tool in graph theory, for both structural as  well as algorithmic reasons (see, e.g.,~\cite{MR2188176} and~\cite{MR1105479,MR1042649}, respectively).
The definition is based on the notion of a \textsl{tree decomposition} of a graph, that is, a collection of subsets of the vertex set of the graph called \emph{bags} such that the endpoints of each edge of the graph appear in some bag, and, moreover, the bags are arranged in a tree so that the bags containing any fixed vertex of the graph form a nonempty subtree (see \cref{sec:prelim} for a precise definition).
In the case of treewidth, the aim is to find a tree decomposition that minimizes the maximum size of a bag.

One of the limitations of treewidth is that graphs with bounded treewidth are necessarily sparse.
To overcome this limitation, several more general graph parameters have been defined in the literature that can also be bounded on dense graphs, while still retaining some of the good features of treewidth (see, e.g.,~\cite{hlinveny2008width,MR3721445,MR4402362,Yolov18}).
In particular, Yolov~\cite{Yolov18} and independently Dallard, Milanič, and Štorgel~\cite{DMS24a} introduced the notion of \emph{tree-independence number} (denoted by $\treealpha$), which is a graph parameter based on tree decompositions that is defined similarly as treewidth, but where instead of bounding the maximum size of a bag in a tree decomposition, what matters is the maximum size of an \textsl{independent set} contained in a bag.
This parameter properly generalizes treewidth, as it can indeed capture dense graphs.
For example, chordal graphs are known to admit tree decompositions such that every bag is a clique (see, e.g.,~\cite{MR1320296}); in other words, they have tree-independence number $1$.
Classes of graphs with bounded tree-independence number have interesting structural properties (for example, treewidth of such graphs can only be large due to the presence of a large clique, see~\cite{DMS24a}) and admit polynomial-time algorithms for several problems that are \textsf{NP}-hard on general graphs (see~\cite{DFGKM25,DMS24a,LMMORS-ESA24,Yolov18,MR4640320}).

However, some highly structured graphs, such as balanced complete bipartite graphs $K_{t,t}$, can have large treewidth even though they do not have large cliques; consequently, such graphs do not have bounded tree-independence number.
This observation motivated Yolov~\cite{Yolov18} to introduce and study a further generalization of tree-independence number, called \emph{induced matching treewidth}\footnote{Actually, Yolov defined this parameter for general hypergraphs and called it \emph{minor-matching hypertreewidth}. The terminology and notation we use comes from the work of Lima et al.~\cite{LMMORS-ESA24}, as it is more suitable to our setting.} (and denoted by $\yolov$), which is also based on tree decompositions, but the measure is the maximum size of an induced matching in the graph such that some bag of the tree decomposition intersects every edge of the matching.
Induced matching treewidth generalizes tree-independence number, in the sense that bounded tree-independence number implies bounded induced matching treewidth.
The generalization is proper: complete bipartite graphs, which have  induced matching treewidth~$1$ but arbitrarily large tree-independence number, show that tree-independence number cannot be bounded from above by any function of induced matching treewidth.

As shown already by Yolov~\cite{Yolov18}, classes of graphs with bounded induced matching treewidth enjoy some of the good algorithmic properties of graph classes of bounded tree-independence number. 
In particular, Lima, Milanič, Muršič, Okrasa, Rzążewski, and Štorgel (see~\cite{LMMORS-ESA24,LMMORS24}) conjectured that classes with bounded induced matching treewidth admit a polynomial-time algorithm for a meta-problem defined by a fixed \textsf{CMSO}$_2$-sentence $\Phi$ and a fixed integer $k$, where for a given vertex-weighted graph $G$, the task is to find a maximum-weight set $X\subseteq V(G)$ that induces a subgraph with treewidth at most $k$ that satisfies $\Phi$.
This conjecture was recently proved by Bodlaender, Fomin, and Korhonen~\cite{bodlaender2025finding}.

From the structural point of view, both tree-independence number and induced matching treewidth have been studied in recent literature, see~\cite{MR4906164,LMMORS24} for induced matching treewidth and~\cite{chudnovsky2024treeindependencenumberivsoda,chudnovsky2024treeindependencenumberiii,MR4955546,abrishami2024tree,chudnovsky2025treeindependencenumberv,dallard2024treewidthversuscliquenumber,DMS24b,DBLP:conf/soda/AhnGHK25,hilaire2025treewidthversuscliquenumber,MR4970654} for tree-independence number.
In particular, the two parameters were studied in relation to each other and to other width parameters, see~\cite{bergougnoux2023newwidthparametersindependent,MR4906164,LMMORS24}.
For instance, Lima et al.~\cite{LMMORS24} compared the two parameters in the context of (distance) powers of graphs, showing among other things that ${\treealpha(G^{k})\le \yolov(G)}$ for every odd integer $k\ge 3$ and every graph $G$ with at least one edge.
Furthermore, Abrishami, Bria\'nski, Czy\.zewska, McCarty, Milani\v{c}, Rz\k{a}\.zewski, and Walczak~\cite{MR4906164} proved that for graph classes closed under induced subgraphs, complete bipartite graphs are the only reason why bounded induced matching treewidth does not imply bounded tree-independence number.

\begin{theorem}[Abrishami et al.]\label{thm:bicliqueold}
For every two positive integers\/ $\mu$ and\/ $t$, there is an integer\/ $\K(\mu,t)$ such that the following holds.
Every\/ $K_{t,t}$-free graph\/ $G$ with\/ $\yolov(G)\leq\mu$ satisfies\/ $\treealpha(G)<\K(\mu,t)$.
\end{theorem}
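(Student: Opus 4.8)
The plan is to bound $\treealpha(G)$ through balanced separators. Recall the standard fact that there is a (linear) function $f$ with $\treealpha(G)\le f(k)$ whenever $G$ has the property that for every $W\subseteq V(G)$ there is a set $S\subseteq V(G)$ with $\alpha(G[S])\le k$ such that every connected component of $G-S$ meets $W$ in at most $|W|/2$ vertices~\cite{DMS24a}. Since both ``$K_{t,t}$-free'' and ``$\yolov\le\mu$'' are inherited by induced subgraphs, it suffices to produce, for every $K_{t,t}$-free graph $G$ with $\yolov(G)\le\mu$ and every $W\subseteq V(G)$, such a separator $S$ with $\alpha(G[S])$ bounded by a value $p(\mu,t)$ that is polynomial in $\mu$ and $t$; this yields $\treealpha(G)\le f(p(\mu,t))<\K(\mu,t)$ with $\K$ polynomial in $\mu$ and $t$, which is stronger than the statement as phrased.

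So fix $W$, take a tree decomposition of $G$ of $\yolov$-width at most $\mu$, and (by the usual centroid-of-the-tree argument) let $B$ be a bag such that every component of $G-B$ meets $W$ in at most $|W|/2$ vertices. If $\alpha(G[B])\le p(\mu,t)$ we are done with $S=B$; otherwise fix a maximum independent set $I$ of $G[B]$, which is then large. The one feature of a bag that I use is that $B$ hosts no induced matching of $G$ of size more than $\mu$: by definition of $\yolov$, any induced matching each of whose edges has an endpoint in $B$ has size at most $\yolov(G)\le\mu$; in particular this holds for induced matchings all of whose edges have an endpoint in $I\subseteq B$.

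I claim $S:=N_G(I)$ works. Since $I$ is maximal in $G[B]$, every vertex of $B\setminus I$ has a neighbour in $I$, so $B\subseteq N_G[I]$; hence in $G-N_G(I)$ every vertex of $I$ forms an isolated component (meeting $W$ in at most one vertex, which is harmless when $|W|\ge 2$, the cases $|W|\le1$ being trivial), while every other component lies inside a component of $G-B$ and so meets $W$ in at most $|W|/2$ vertices. Thus $S=N_G(I)$ is always a valid separator, and it remains to bound $\alpha(G[S])$. Let $C$ be a maximum independent set of $G[N_G(I)]$, so $C\cap I=\emptyset$. Since $I$ and $C$ are independent in $G$ and $G$ has no induced $K_{t,t}$, the bipartite graph $G[I\cup C]$ has no $K_{t,t}$ subgraph, so the K\"ov\'ari--S\'os--Tur\'an theorem bounds its number of edges by $\Oh_t\bigl((|I|+|C|)^{2-1/t}\bigr)$. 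This sparsity, together with the cap $\mu$ on induced matchings hitting $B$, is the engine of the proof: after discarding the few vertices whose degree exceeds a constant multiple of the K\"ov\'ari--S\'os--Tur\'an average, one greedily extracts an induced matching of $G[I\cup C]$, each chosen edge now deleting only $\Oh_t(n^{1-1/t})$ further vertices (with $n$ the current size), so were $|C|$ too large in terms of $\mu$ and $t$ we would obtain an induced matching of size more than $\mu$ each of whose edges meets $I\subseteq B$ --- contradicting $\yolov(G)\le\mu$. Hence $|C|=\alpha(G[N_G(I)])\le p(\mu,t)$, as required.

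The main obstacle is this last step: extracting a long induced matching from the density estimate. One must arrange the peeling so that the vertex sets removed at each step really have size $\Oh_t(n^{1-1/t})$ --- which fails for high-degree vertices and forces a preliminary regularisation --- and so that the process lasts more than $\mu$ rounds rather than stalling for lack of available edges. The genuinely delicate case is the ``concentrated'' one, where the large independent set $C$ in $N_G(I)$ is dominated by only a bounded number of vertices of $I$ (as happens already for $K_{1,n}$), so that $G[I\cup C]$ carries no long induced matching at all; there one must instead pass to a suitable sub-bipartite graph without such degeneracy, at the cost of a polynomial loss in the bounds. This is precisely where one gains over~\cite{MR4906164}: a single invocation of the K\"ov\'ari--S\'os--Tur\'an density bound replaces their nested applications of Ramsey's theorem, so $\K(\mu,t)$ becomes polynomial rather than exponential in $\mu$ and $t$.
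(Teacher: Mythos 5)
Your plan --- bound $\treealpha$ via balanced separators extracted from a $\yolov$-optimal tree decomposition, using K\"ov\'ari--S\'os--Tur\'an in place of Ramsey to get polynomial bounds --- is genuinely different from the paper's. The paper does not use the separator framework; it keeps the given tree decomposition $\cT$, partitions a maximum independent set $S$ of $G$ into ``light'' vertices (those $v$ with $\alpha(N(v))$ small) and ``heavy'' vertices, and builds a new tree decomposition $\cT'$ by replacing each light $s\in\beta(x)$ with $N(\beta(x)\cap S_\ell)$ in the bag while hanging a leaf bag $N[s]$ off the subtree $T_s$. The bound on $\alpha(\cT')$ then comes from three separate claims, two of which bound contributions from $\beta(x)\setminus S$ and from $\beta(x)\cap S_h$ by the matching-extraction lemma, and one of which bounds $\alpha(N(\beta(x)\cap S_\ell))$ by $\mu\cdot\C(\mu,t)$.

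The gap in your argument is in the step that you flagged yourself as ``delicate'', and it is not repairable by ``passing to a suitable sub-bipartite graph''. Your crucial claim is that $\alpha\bigl(G[N_G(I)]\bigr)$ is at most some $p(\mu,t)$ when $I$ is a maximum independent set of the centroid bag $B$, and this is simply false. Take $G$ to be $K_{1,N}$ with centre $v$ and leaves $D$, together with $m$ isolated vertices $U$; take $t=2$, $\mu=1$. The tree decomposition with central bag $\beta(x)=\{v\}\cup U$ and pendant bags $\{v,d\}$ for $d\in D$ has $\yolov$-width $1$. For $W=U$ the centroid is $x$, so $B=\{v\}\cup U$, which is edgeless, hence $I=B$ and $N_G(I)=D$, giving $\alpha\bigl(G[N_G(I)]\bigr)=N$ which is unbounded. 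In this instance $G[I\cup C]=G[\{v\}\cup U\cup D]$ is a star plus isolated vertices, so there is no induced matching of size $2$ to extract regardless of how you regularise, and after discarding the sole ``concentrating'' vertex $v$ the bipartite graph $G[(I\setminus\{v\})\cup D]$ is edgeless, so the suggested fallback dies too. Note that $\treealpha(G)=1$ here, so the theorem is fine; it is your proof of it that breaks. The reason the paper's proof survives this example is that it never asserts that the neighbourhood of $I$ has small independence number: the heavy vertex $v$ is \emph{not} in $S$ (the light leaves $D$ are), and the light vertices of $S$ in a bag are shipped out into new leaf bags $N[s]$, so that only $N(\beta(x)\cap S_\ell)$ --- which Claim 3.4 bounds using that the light vertices have small $\alpha(N(\cdot))$ plus the induced-matching cap --- remains in the central bag. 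Your separator $N_G(I)$ collapses these two sources into one and thereby loses exactly the control that the light/heavy split provides.
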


A systematic comparison of various graph width parameters under the assumption of excluding some $K_{t,t}$ as a subgraph or induced subgraph was done recently by Brettell, Munaro, Paulusma, and Yang (see~\cite{MR4901497}).
Their work leaves open the following question: Is it true that in the absence of some fixed complete bipartite graph as an induced subgraph, bounded \textsl{sim-width} implies bounded tree-independence number? (For the definition of sim-width, see, e.g., the paper of Brettell et al.~\cite{MR4901497}.)
An affirmative answer would generalize \Cref{thm:bicliqueold}.

The purpose of this note is to give a different, quantitative improvement of \Cref{thm:bicliqueold}.
The function resulting from the proof of \Cref{thm:bicliqueold} in~\cite{MR4906164} relies on multiple applications of Ramsey's theorem and is thus exponential in $\mu$ even for fixed $t$.
We show, using the K\"ov\'ari-S\'os-Tur\'an theorem~\cite{MR65617}, that for any class of $K_{t,t}$-free graphs, induced matching treewidth and tree-independence number are in fact \textsl{polynomially} related.

\begin{restatable}{theorem}{thmbiclique}
\label{thm:biclique}
For every two positive integers\/ $\mu$ and\/ $t$, every\/ $K_{t,t}$-free graph\/ $G$ with\/ $\yolov(G)\leq\mu$ satisfies\/ $\treealpha(G)= \Oh_t(\mu^{3t^2+1})$.
\end{restatable}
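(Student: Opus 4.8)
The plan is to reduce to balanced separators and then use the K\"ov\'ari-S\'os-Tur\'an theorem to replace a ``$\yolov$-cheap'' separator by an $\alpha$-cheap one. Recall the known fact (see \cite{DMS24a,Yolov18}) that there is an absolute constant $c$ such that $\treealpha(G)\le ck$ provided every induced subgraph $G'$ of $G$ has, for every $W\subseteq V(G')$, a set $S\subseteq V(G')$ with $\alpha(G'[S])\le k$ such that every component of $G'-S$ contains at most $\tfrac{1}{2}|W|$ vertices of $W$. As the class of $K_{t,t}$-free graphs with $\yolov\le\mu$ is hereditary, it suffices to produce such an $S$ for $G$ and every $W$, with $\alpha(G[S])=\Oh_t(\mu^{3t^2+1})$.

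Fix $W$, and in a tree decomposition witnessing $\yolov(G)\le\mu$ pick a node $u$ whose bag $X:=X_u$ is a $W$-balanced separator in the usual sense: the vertex sets $V_1,\dots,V_d$ associated with the components of the decomposition tree minus $u$ each meet $W$ in at most $\tfrac{1}{2}|W|$ vertices, there are no $G$-edges between distinct $V_i$'s, and $V(G)=X\cup V_1\cup\dots\cup V_d$. The defining property of the decomposition says that no induced matching of $G$ all of whose edges touch $X$ has size more than $\mu$. If $S$ covers every edge of $G$ incident with $X$ then every vertex of $X\setminus S$ is isolated in $G-S$, so every component of $G-S$ is a singleton or lies inside one $V_i$, hence meets $\le\tfrac{1}{2}|W|$ vertices of $W$ (the case $|W|\le 1$ being trivial). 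So everything reduces to a statement with no tree decompositions in it: \emph{in a $K_{t,t}$-free graph, if $X$ is touched by no induced matching of size $>\mu$, then the edges incident with $X$ have a vertex cover $S$ with $\alpha(G[S])=\Oh_t(\mu^{3t^2+1})$.}

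The engine is one observation together with two classical results. Observation: if $A,B\subseteq V(G)$ are disjoint independent sets, then a bipartite subgraph with sides $A$ and $B$ has no $K_{t,t}$-subgraph (it would be an induced $K_{t,t}$ of $G$) and all of its induced matchings are induced matchings of $G$. Consequently a $K_{t,t}$-free bipartite graph with induced matching number $\le\mu$ has matching number only $\Oh_t(\mu^t)$ --- repeatedly take a matched edge of minimum degree and bound, by K\"ov\'ari-S\'os-Tur\'an, the matched edges conflicting with it --- hence, by K\"onig's theorem, a vertex cover of size $\Oh_t(\mu^t)$. One applies this inside $X$ starting from a maximum independent set $I$ of $G[X]$ (if $\alpha(G[X])$ is already within the target, take $S=X$): the bipartite graph from $I$ to a maximum independent subset of its neighbourhood gets a small vertex cover, which goes into $S$; any vertex owning a huge independent neighbourhood, and the few high-degree vertices that K\"ov\'ari-S\'os-Tur\'an leaves uncontrolled, go into $S$ directly; the remaining uncovered edges (from $I$ into the non-independent rest of its neighbourhood, and those incident with $X\setminus I$) are handled by repeating the dichotomy, with the degree threshold chosen as a power of $\mu$. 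After a bounded-in-$t$ number of rounds the sets thrown into $S$ have total independence number $\Oh_t(\mu^{3t^2+1})$.

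I expect the bookkeeping of the previous paragraph to be the hard part, since essentially all the content of the theorem sits there. A vertex of high degree, or with a large independent neighbourhood, cannot be covered cheaply and must be charged to $S$ individually, so one must check that few such vertices arise; and one must verify that the non-independent leftovers are absorbed by a \emph{second} pass of the K\"onig/K\"ov\'ari-S\'os-Tur\'an dichotomy rather than triggering an unbounded recursion. Choosing the degree threshold is what makes the two passes balance --- the ``heavy'' estimate $\Oh_t((|I|/d)^t)$ improves with $d$ while the ``light'' estimate worsens by a factor $d$ --- and it is this pairing of two K\"ov\'ari-S\'os-Tur\'an applications (giving a $t^2$ in the exponent) together with the accompanying constants that yields $3t^2+1$. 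The first-paragraph reduction from the balanced-separator property to an actual tree decomposition is standard and is used here as a black box.
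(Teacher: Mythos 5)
Your plan genuinely diverges from the paper's. The paper inherits the entire skeleton of Abrishami et al.'s proof verbatim---it splits a global maximum independent set into light and heavy vertices, replaces the light vertices in each bag by their neighbourhoods, attaches leaf bags $N[s]$, and bounds the independence number of the resulting tree decomposition through Claims~3.3--3.7---and its only new contribution is to swap the two Ramsey-based lemmas of~\cite{MR4906164} for the two K\"ov\'ari--S\'os--Tur\'an-based ones (\cref{lem:extract-induced-matching,lem:extract-independent-set}). You instead try to bypass the explicit re-decomposition entirely and go through the balanced-separator characterisation of $\treealpha$, reducing the theorem to a clean local statement: a bag $X$ with $\mu(G,X)\le\mu$ in a $K_{t,t}$-free graph admits a cover of its incident edges with small induced independence number. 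That is an attractive reformulation, and if it worked out it would be shorter than the paper's route (no Claim~3.6/3.7, no new tree). Your ingredients are also in the right spirit: the bipartite ``matching vs.\ induced matching'' step is essentially the contrapositive of \cref{lem:extract-induced-matching} (you propose a greedy min-degree extraction rather than the paper's edge-contraction plus Tur\'an, but either yields $\Oh_t(\mu^t)$), and the light/heavy dichotomy via $\alpha(N(v))$ is exactly the paper's.

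The gap is that the engine itself is not proved, and you say so. The hard part---how to cover edges from $I$ to the non-independent part of $N(I)$, how to control the heavy vertices' contribution to $\alpha(G[S])$, and why iterating the K\H{o}nig/KST dichotomy closes after boundedly many rounds rather than recursing indefinitely---is exactly where the $\mu(G,X)\le\mu$ hypothesis has to be used, and you do not show how. In the paper this work is carried by Claims~3.4 and~3.5: Claim~3.4 bounds $\alpha(N(\beta(x)\cap S_\ell))$ by $\mu\cdot\C(\mu,t)$ using \cref{lem:extract-independent-set} applied to the small independent neighbourhoods of light vertices, and Claim~3.5 bounds $|\beta(x)\cap S_h|$ using \cref{lem:extract-independent-set} applied to the large independent neighbourhoods of heavy ones; both arguments must then turn the transversal independent set into an induced matching hitting the bag to contradict $\mu(\cT)\le\mu$. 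Your sketch gestures at a ``degree threshold $d$'' trade-off between $\Oh_t((|I|/d)^t)$ and a factor of $d$, which does not obviously track either of these lemmas, and it is not clear that the leftovers after one round are structurally any simpler than what you started with. Until the engine and the round-bounding argument are written out, this is a plan rather than a proof. Two smaller cautions: the balanced-separator-to-$\treealpha$ black box needs a precise citation (the characterisation must be applied hereditarily and with the constant tracked, since you are claiming a specific polynomial), and if you take the min-degree route to the bipartite matching bound you should minimise $\deg(u)+\deg(v)$ over matched edges, not the degree of a single endpoint, and re-verify the count after deletions.
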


After giving the necessary definitions in \Cref{sec:prelim}, we prove in
\Cref{sec:auxiliary-lemmas} two auxiliary Ramsey-type results with polynomial bounds.
Our main result, \cref{thm:biclique}, is proved in \Cref{sec:proof}.
In conclusion, we pose some open questions in \Cref{sec:open}.

\section{Preliminaries}\label{sec:prelim}

All graphs considered in this paper are finite, simple, and undirected.
Let $G$ be a graph.
For a set $X\subseteq V(G)$, we denote by $N_G(X)$, or simply by $N(X)$ if the graph is clear from the context, the set of vertices in $V(G)\setminus X$ that are adjacent to at least one vertex in $X$.
For $v\in V(G)$, we write $N(v)$ for the set $N(\{v\})$ and refer to its cardinality as the \emph{degree} of $v$; furthermore, we denote by $N[v]$ the set $N(v)\cup \{v\}$.
An \emph{independent set} in $G$ is a set of pairwise nonadjacent vertices.
The \emph{independence number} of $G$, denoted by $\alpha(G)$, is defined as the maximum cardinality of an independent set.
A \emph{matching} in $G$ is a set of pairwise disjoint edges.
An \emph{induced matching} in $G$ is a matching $M$ such that no two endpoints of distinct edges in $M$ are adjacent in $G$.

A graph $G$ is \emph{bipartite} if its vertex set is the union of two independent sets in $G$.
For a positive integer $t$, we denote by $K_{t,t}$ the \emph{balanced complete bipartite graph} with both parts of size $t$, that is, a graph that admits a partition of its vertex set into two parts of size $t$ such that two distinct vertices are adjacent if and only if they belong to different parts. 
For a graph $G$ and a set $X\subseteq V(G)$, we denote by $G[X]$ the subgraph of $G$ induced by $X$, that is, the graph with vertex set $X$ in which two vertices are adjacent if and only if they are adjacent in $G$.
Given two graphs $H$ and $G$, we say that $G$ is \emph{$H$-free} if no induced subgraph of $G$ is isomorphic to $H$.

A \emph{tree decomposition} of a graph $G$ is a pair $\cT = (T,\beta)$ consisting of a tree $T$ and a function $\beta$ defined on the vertex set of $T$ that assigns to each node $t\in V(T)$ a set $\beta(t)\subseteq V(G)$ called a \emph{bag} such that for each edge $e\in E(G)$, there exists a bag containing both endpoints of $e$, and for each vertex $v\in V(G)$, the set of nodes $t\in V(T)$ such that $v\in \beta(t)$ induces a nonempty subtree $T_v$ of $T$.
A \emph{balanced separator} in a graph $G$ is a set $S\subseteq V(G)$ such that no component of the graph $G-S$ contains more than $|V(G)|/2$ vertices.
For every tree decomposition $(T,\beta)$ of a graph $G$, there exists a bag $\beta(t)$ that is a balanced separator in $G$ (see, e.g., \cite[proof of Lemma 7.19]{MR3380745}).

We now define the two main parameters studied in this paper, using the terminology and notation from~\cite{DMS24a,LMMORS-ESA24,MR4906164}. 
For a tree decomposition $\cT = (T,\beta)$ of a graph $G$, the \emph{independence number} of $\cT$, denoted by $\alpha_G(\cT)$ ---or simply by $\alpha(\cT)$ if the graph is clear from the context---, is defined as the maximum independence number of a subgraph of $G$ induced by a bag, that is, $\alpha(\cT) = \max\{\alpha(G[\beta(t)])\colon t\in V(T)\}$.
The \emph{tree-independence number} of a graph $G$, denoted by $\treealpha(G)$, is the minimum independence number of a tree decomposition of $G$.

Similarly, for a tree decomposition $\cT = (T,\beta)$ of a graph $G$, the \emph{induced matching number} of $\cT$, denoted by $\mu_G(\cT)$ ---or simply by $\mu(\cT)$ if the graph is clear from the context---, is defined as the maximum integer $k$ such that $G$ admits an induced matching $M$ with $k$ edges such that some bag of $\cT$ contains at least one endpoint of each edge in $M$, that is, $\mu(\cT) = \max\{\mu(G,\beta(t))\colon t\in V(T)\}$, where, for a set $X\subseteq V(G)$, we denote by $\mu(G,X)$ the maximum cardinality of an induced matching $M$ in $G$ such that $e\cap X\neq \emptyset$ for all $e\in M$. 
The \emph{induced matching treewidth} of a graph $G$, denoted by $\yolov(G)$, is the minimum induced matching number of a tree decomposition of $G$.

\section{Auxiliary Ramsey-type lemmas}\label{sec:auxiliary-lemmas}

In this section we show two auxiliary Ramsey-type results with polynomial bounds.
In both we will use the celebrated K\"ov\'ari-S\'os-Tur\'an theorem~\cite{MR65617}.

\begin{theorem}[K\"ov\'ari-S\'os-Tur\'an]\label{thm:kst}
        For every fixed $t \geq 1$, every $n$-vertex graph that does not contain $K_{t,t}$ as a subgraph has at most $\frac{(t-1)^{1/t}}{2} n^{2-1/t} + \frac{tn}{2}$ edges.
\end{theorem}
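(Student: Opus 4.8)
The plan is to run the classical double-counting argument on stars $K_{1,t}$. Fix a $K_{t,t}$-free $n$-vertex graph $G$ with $m$ edges, and let $d_v$ denote the degree of a vertex $v$. I would count the pairs $(v,S)$ where $S\subseteq V(G)$ has size $t$ and $v$ is adjacent to every vertex of $S$. Grouping by $v$, the vertex $v$ contributes exactly $\binom{d_v}{t}$ such pairs, so the total count equals $\sum_{v\in V(G)}\binom{d_v}{t}$. Grouping by $S$ instead, the contribution of $S$ is the number of common neighbours of its $t$ vertices; since $G$ has no $K_{t,t}$ subgraph, no $t$-set can have $t$ (or more) common neighbours --- otherwise $t$ of those neighbours together with $S$ would span a $K_{t,t}$, with the two sides disjoint since no vertex is its own neighbour --- so each $S$ contributes at most $t-1$. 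Hence
\begin{equation}\label{eq:kst-double}
\sum_{v\in V(G)}\binom{d_v}{t}\;\le\;(t-1)\binom{n}{t}.
\end{equation}

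Next I would convexify the left-hand side. Let $f$ be defined by $f(x)=\binom{x}{t}=\frac{x(x-1)\cdots(x-t+1)}{t!}$ for $x\ge t-1$ and $f(x)=0$ for $0\le x\le t-1$; this $f$ is convex on $[0,\infty)$ and agrees with the ordinary binomial coefficient at all nonnegative integers. Writing $d=\frac1n\sum_v d_v=\frac{2m}{n}$ for the average degree, Jensen's inequality gives $\sum_v\binom{d_v}{t}=\sum_v f(d_v)\ge n\,f(d)$. If $d\le t-1$, then $2m=nd\le (t-1)n\le tn$ and the asserted bound already holds, its first summand being nonnegative; so I may assume $d\ge t-1$, in which case $f(d)=\binom{d}{t}$, and combining with \eqref{eq:kst-double} yields
\begin{equation}\label{eq:kst-jensen}
n\binom{d}{t}\;\le\;(t-1)\binom{n}{t}.
\end{equation}

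It then remains to extract the stated estimate. On the one hand $\binom{n}{t}\le n^{t}/t!$; on the other, for $x\ge t-1$ every factor of $x(x-1)\cdots(x-t+1)$ is at least $x-t+1\ge 0$, so $\binom{d}{t}\ge (d-t+1)^{t}/t!$. Substituting both into \eqref{eq:kst-jensen} and multiplying through by $t!/n$ gives $(d-t+1)^{t}\le (t-1)n^{t-1}$, hence $d-t+1\le (t-1)^{1/t}n^{1-1/t}$, and therefore $d\le (t-1)^{1/t}n^{1-1/t}+(t-1)\le (t-1)^{1/t}n^{1-1/t}+t$. Since $m=\tfrac{nd}{2}$, this is exactly $m\le \frac{(t-1)^{1/t}}{2}\,n^{2-1/t}+\frac{tn}{2}$, as required.

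There is no real obstacle here --- this is the standard proof of the Kővári–Sós–Turán theorem --- so the only points needing care are the convexity step (one must use the truncated convex function $f$ rather than the binomial polynomial, which dips below $0$ on $(0,t-1)$) together with the accompanying case split on whether the average degree reaches $t-1$; everything else is routine manipulation of binomial coefficients. As a sanity check, the degenerate case $t=1$ is consistent: a $K_{1,1}$-free graph has no edges, comfortably within the stated bound $n/2$.
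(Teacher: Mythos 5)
The paper does not prove this statement: it is the classical K\H{o}v\'ari--S\'os--Tur\'an theorem, quoted with a citation to the original 1954 paper. Your proof is the standard double-counting argument and is correct, including the two points that genuinely need care: that a $t$-set with $t$ common neighbours yields a $K_{t,t}$ with disjoint sides (since a vertex is never its own neighbour), and that Jensen must be applied to the truncated convex extension of $x\mapsto\binom{x}{t}$ together with the case split on whether the average degree reaches $t-1$. Nothing to compare against in the paper, and no gaps in your argument.
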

This immediately yields the following.
\begin{corollary}\label{cor:kst}
    For every $t \geq 1$ there exists $n_t$ such that for every $n \geq n_t$, every $n$-vertex graph that does not contain $K_{t,t}$ as a subgraph has at most $n^{2-1/t}$ edges.
\end{corollary}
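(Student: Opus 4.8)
The plan is to derive \cref{cor:kst} directly from \cref{thm:kst}; there is essentially only one point that needs checking. Write $e(G)$ for the number of edges of $G$. By \cref{thm:kst}, every $n$-vertex graph $G$ that does not contain $K_{t,t}$ as a subgraph satisfies
\[
  e(G) \;\le\; c_t\, n^{2-1/t} + \frac{t n}{2}, \qquad \text{where } c_t := \frac{(t-1)^{1/t}}{2}.
\]
So it suffices to show that for all sufficiently large $n$ the right-hand side is at most $n^{2-1/t}$; equivalently, that the linear error term $tn/2$ can be absorbed into the slack between $c_t\, n^{2-1/t}$ and $n^{2-1/t}$.

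The key observation---really the only non-routine step---is that $c_t < 1$. Indeed $(t-1)^{1/t} \le t^{1/t}$ since $x\mapsto x^{1/t}$ is increasing, and $t^{1/t} < 2$ for every integer $t\ge 1$ because $t < 2^t$; hence $c_t \le t^{1/t}/2 < 1$.

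Given this, the inequality $c_t\, n^{2-1/t} + tn/2 \le n^{2-1/t}$ is equivalent to $tn/2 \le (1-c_t)\, n^{2-1/t}$, i.e.\ to $n^{\,1-1/t} \ge \tfrac{t}{2(1-c_t)}$. For $t=1$ this reads $1 = n^0 \ge \tfrac12$, which holds for all $n\ge 1$ (and in any case a $K_{1,1}$-subgraph-free graph has no edges). For $t\ge 2$ we have $1-1/t>0$, so $n^{\,1-1/t}\to\infty$ as $n\to\infty$, and it suffices to take any $n_t$ with $n_t^{\,1-1/t} \ge \tfrac{t}{2(1-c_t)}$; for instance $n_t = \bigl\lceil \bigl(\tfrac{t}{2(1-c_t)}\bigr)^{t/(t-1)}\bigr\rceil$ works.

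I do not expect a real obstacle here: the whole content is that the coefficient $c_t$ of the leading term $n^{2-1/t}$ in the Kővári--Sós--Turán bound is a constant strictly below $1$, which leaves room to swallow the lower-order term for large $n$. The same argument would go through with the target $n^{2-1/t}$ replaced by $(1+\varepsilon)c_t\, n^{2-1/t}$ for any fixed $\varepsilon>0$, at the cost of enlarging $n_t$.
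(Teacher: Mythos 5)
Your proof is correct and is exactly the computation the paper leaves implicit behind the phrase ``this immediately yields'': the coefficient $c_t=(t-1)^{1/t}/2$ of the leading term is strictly below~$1$, so the linear term $tn/2$ is absorbed once $n^{1-1/t}\ge t/(2(1-c_t))$, which holds for all large $n$ (and trivially for $t=1$).
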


Let us also recall a variant of the Tur\'an's theorem, see, e.g., \cite[pp. 95-96]{AS} and the proof of Theorem~2 therein.

\begin{theorem}[Tur\'an]\label{lem:extractis}
    Let $\sigma \geq 1$ be a real number and let $Q$ be an $n$-vertex graph with at most $\sigma n$ edges.
    Then $Q$ has an independent set of size at least $\frac{n}{2\sigma+1} \geq \frac{n}{3\sigma}$.
\end{theorem}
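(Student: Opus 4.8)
The plan is to derive the statement from the Caro--Wei inequality, which I would obtain by the classical random-permutation argument, and then apply Jensen's inequality to convert the resulting sum of reciprocal degrees into a bound in terms of the edge count. Throughout, write $V = V(Q)$, $n = |V|$, and $e = |E(Q)|$, so that $e \le \sigma n$ and $\sum_{v\in V} d_Q(v) = 2e \le 2\sigma n$.

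First I would show that $Q$ contains an independent set of size at least $\sum_{v\in V}\frac{1}{d_Q(v)+1}$. To this end, fix a uniformly random linear order $\prec$ on $V$ and let $I$ be the set of vertices $v$ that precede, under $\prec$, every neighbour of $v$ in $Q$. Then $I$ is independent, since of any two adjacent vertices of $I$ the $\prec$-smaller one would violate the defining condition. For a fixed $v$, the event $v\in I$ is exactly the event that $v$ is the $\prec$-minimum of the closed neighbourhood $N_Q[v] := N_Q(v)\cup\{v\}$, a set of $d_Q(v)+1$ vertices, which by symmetry happens with probability $\tfrac{1}{d_Q(v)+1}$. Hence $\mathbb{E}[|I|] = \sum_{v\in V}\tfrac{1}{d_Q(v)+1}$, and some choice of $\prec$ attains at least this value. (Equivalently, one can argue deterministically by repeatedly deleting the closed neighbourhood of a minimum-degree vertex; the random version is the one used in~\cite{AS}.)

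Next I would lower-bound $\sum_{v\in V}\frac{1}{d_Q(v)+1}$ in terms of $\sigma$. The function $f(x)=\frac{1}{x+1}$ is convex and decreasing on $[0,\infty)$, so Jensen's inequality applied to the values $d_Q(v)$ gives
\[
\sum_{v\in V}\frac{1}{d_Q(v)+1} \;\ge\; n\cdot f\!\Bigl(\tfrac{1}{n}\sum_{v\in V} d_Q(v)\Bigr) \;=\; \frac{n}{\tfrac{2e}{n}+1} \;\ge\; \frac{n}{2\sigma+1},
\]
where the last step uses $e\le\sigma n$ together with $f$ being decreasing. Combined with the previous paragraph, this yields $\alpha(Q)\ge \frac{n}{2\sigma+1}$, and since $\sigma\ge 1$ implies $2\sigma+1\le 3\sigma$, also $\frac{n}{2\sigma+1}\ge\frac{n}{3\sigma}$, as claimed.

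I do not expect any genuine obstacle here: this is a textbook fact, essentially Tur\'an's theorem in its degree-averaged form. The only points needing a moment's care are invoking \emph{convexity} (not concavity) of $x\mapsto 1/(x+1)$, so that Jensen points in the right direction, and the trivial appeal to $\sigma\ge1$ in the final comparison; degenerate cases such as $e=0$ or the presence of isolated vertices are handled automatically, since the displayed inequalities remain valid verbatim.
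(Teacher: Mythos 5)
Your proof is correct, and it is exactly the argument the paper points to: the paper does not prove this statement itself but cites the Caro--Wei bound and its random-permutation proof from Alon--Spencer (pp.~95--96, proof of Theorem~2 there), followed by the same convexity step you use. Both the Jensen direction and the final appeal to $\sigma\ge 1$ check out, so there is nothing to add.
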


\paragraph{Extracting a large induced matching.} First, we show that in a $K_{t,t}$-free bipartite graph, from a large matching one can extract an induced matching of polynomial size.

\begin{lemma}\label{lem:extract-induced-matching}
There exists  a function $\M(s,t) = \Oh_t(s^t)$ for which the following holds.
Every bipartite graph that contains a matching of size at least $\M(s,t)$ contains either an induced\/ $K_{t,t}$ or an induced matching of size\/ $s+1$.
\end{lemma}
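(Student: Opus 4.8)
The plan is to reduce to the K\"ov\'ari--S\'os--Tur\'an theorem via an auxiliary bipartite graph that records, for pairs of matching edges, the precise way in which they interact. Fix a bipartition $(A,B)$ of $G$ and a matching $M=\{a_1b_1,\dots,a_mb_m\}$ with $a_i\in A$, $b_i\in B$, and $m\ge \M(s,t)$, where $\M$ is to be chosen. Let $D$ be the bipartite graph on $L\sqcup R$ with $L=\{\ell_1,\dots,\ell_m\}$ and $R=\{r_1,\dots,r_m\}$, where $\ell_ir_j\in E(D)$ exactly when $i\ne j$ and $a_ib_j\in E(G)$. The dichotomy in the lemma is obtained according to whether or not $D$ contains $K_{t,t}$ as a subgraph.

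If $D$ contains a copy of $K_{t,t}$, then, since $K_{t,t}$ is connected and bipartite and hence has a unique bipartition, this copy has sides $\{\ell_i:i\in I\}$ and $\{r_j:j\in J\}$ with $|I|=|J|=t$; completeness of the copy forces $I\cap J=\emptyset$ (there is no ``loop'' $\ell_ir_i$ in $D$) and $a_ib_j\in E(G)$ for all $i\in I$ and $j\in J$. Since $\{a_i:i\in I\}\subseteq A$ and $\{b_j:j\in J\}\subseteq B$ are independent sets of $G$, the induced subgraph $G[\{a_i:i\in I\}\cup\{b_j:j\in J\}]$ is precisely $K_{t,t}$, so $G$ contains an induced $K_{t,t}$ and we are done.

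Otherwise $D$ is $K_{t,t}$-free. Let $H$ be the conflict graph on $\{1,\dots,m\}$ with $i\sim j$ iff $a_ib_j\in E(G)$ or $a_jb_i\in E(G)$; each edge $ij$ of $H$ injects into $E(D)$ (pick whichever of $\ell_ir_j,\ell_jr_i$ lies in $E(D)$), so $|E(H)|\le|E(D)|$. By \cref{cor:kst} applied to $D$, which has $2m$ vertices (legitimate once $2m\ge n_t$), we get $|E(D)|\le(2m)^{2-1/t}$, and hence $|E(H)|\le\sigma m$ with $\sigma:=(2m)^{2-1/t}/m\ge1$. Feeding this into \cref{lem:extractis} and using $2-1/t\ge1$ to absorb the lower-order terms yields $\alpha(H)\ge m/(2\sigma+1)\ge m^{1/t}/9$. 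Thus if $m>(9s)^t$ then $\alpha(H)\ge s+1$, and an independent set $\{i_1,\dots,i_{s+1}\}$ of $H$ selects matching edges $a_{i_1}b_{i_1},\dots,a_{i_{s+1}}b_{i_{s+1}}$ with no conflicts among them --- that is, an induced matching of size $s+1$ in $G$.

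Taking $\M(s,t):=\max\{(9s)^t+1,\,\lceil n_t/2\rceil\}=\Oh_t(s^t)$ ensures both $2m\ge n_t$ and $m>(9s)^t$ whenever $m\ge\M(s,t)$, completing the argument. The step requiring the most care is the $K_{t,t}$-in-$D$ case: one must verify that such a copy forces disjoint index sets and, crucially, produces an \emph{induced} (not merely spanning) $K_{t,t}$ in $G$ --- this is exactly where bipartiteness of $G$ enters. The rest is routine inequality bookkeeping, kept painless by using the clean bound of \cref{cor:kst} in place of the raw K\"ov\'ari--S\'os--Tur\'an estimate.
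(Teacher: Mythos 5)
Your proof is correct and follows essentially the same strategy as the paper's: use the K\"ov\'ari--S\'os--Tur\'an bound to control the number of cross edges among the matched vertices, then apply Tur\'an's theorem to the conflict graph obtained by contracting matching edges (your $H$ is exactly the paper's $Q$). The only difference is cosmetic: the paper applies \cref{cor:kst} directly to the bipartite subgraph induced by the matched vertices, relying on the observation that in a bipartite graph a $K_{t,t}$ subgraph is automatically induced, whereas you reroute through the auxiliary ``directed'' bipartite graph $D$ and make that point explicit via the case split on whether $D$ contains $K_{t,t}$ --- a slightly longer path to the same bound.
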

\begin{proof}
    For fixed $t$, define
    \[\M(s,t) = \max \left( n_t, \lceil(12(s+1))^{t}\rceil\, \right),\]
    where $n_t$ is given by \cref{cor:kst}. Note that  $\M(s,t) =\Oh_t(s^t)$.
    Let $\widetilde{G}$ be a bipartite graph and let $M$ be a matching in $\widetilde{G}$ of size $n \geq \M(s,t)$.
    Suppose that $\widetilde{G}$ is $K_{t,t}$-free; we aim to exhibit an induced matching of size $s+1$.
    Let $G$ be the subgraph of $\widetilde{G}$ induced by the vertices that belong to the edges of $M$.
    Note that $|V(G)|=2n \geq n_t$. Let $m$ be the number of edges of $G$.
    By \cref{cor:kst}, we have $m \leq (2n)^{2-1/t} < 4 \cdot n^{2-1/t}$.

    Let $Q$ be the graph obtained from $G$ by contracting each edge of $M$ (we do not create parallel edges nor loops);
    clearly $|V(Q)| = n$.
    Observe that $|E(Q)| \leq |E(G)| =m$, as every edge of $G$ gives rise to at most one edge of $Q$.
    Applying \cref{lem:extractis} to $Q$ and $\sigma = 4 \cdot n^{1-1/t}$, we obtain an independent set in $Q$ of size at least
    \[
        \frac{n}{3\sigma} = \frac{n}{12 \cdot  n^{1-1/t}} = \frac{n^{1/t}}{12} \geq s+1.
    \]
    Note that an independent set in $Q$ corresponds to an induced matching in $G$ and thus in $\widetilde{G}$.
    This completes the proof.
\end{proof}

\paragraph{Extracting independent sets.} Now, let us show that in a $K_{t,t}$-free graph, given a family of $m$ large independent sets, one can extract from each a subset of size $s$, such that the union of extracted sets is independent. The crux here is that ``large'' is polynomial in both $s$ and $m$.

\begin{lemma}\label{lem:extract-independent-set}
There exists a function $\N(s,t,m) = \Oh_t( (sm^2)^t)$ for which the following holds.
Let\/ $G$ be a\/ $K_{t,t}$-free graph, and let\/ $I_1,\ldots,I_m$ be independent sets in\/ $G$ each of size at least\/ $\N(s,t,m)$.
Then there is an independent set\/ $I$ in\/ $G$ such that\/ $|I\cap I_i|\geq s$ for all\/ $i\in[m]$.
\end{lemma}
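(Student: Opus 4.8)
The plan is to expose, inside every $I_i$, a large set of vertices of small degree in the union graph, and then build $I$ greedily, one set at a time.

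First I would reduce to the case $|I_i|=N:=\N(s,t,m)$ for every $i$, replacing each $I_i$ by an arbitrary $N$-element subset of itself (an independent set meeting all the smaller sets in at least $s$ vertices meets the original ones in at least $s$ vertices as well). Put $W:=\bigcup_{i\in[m]}I_i$ and $G^{*}:=G[W]$; the $I_i$ may overlap, and that will be harmless. The key structural remark is that for $i\ne j$ the graph $G[I_i\cup I_j]$ is bipartite — one may take the bipartition $(I_i,\,I_j\setminus I_i)$, both classes being independent — and a bipartite graph containing $K_{t,t}$ as a subgraph contains it as an induced subgraph; since $G$ has no induced $K_{t,t}$, the graph $G[I_i\cup I_j]$ has no $K_{t,t}$-subgraph, so by \cref{cor:kst} (applicable as $|I_i\cup I_j|\ge N\ge n_t$) it has at most $(2N)^{2-1/t}$ edges. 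Because $N(v)\cap W=\bigcup_{j\ne i}\bigl(N(v)\cap I_j\bigr)$ for $v\in I_i$, summing the pairwise bounds gives
\[
\sum_{v\in I_i}\deg_{G^{*}}(v)\ \le\ \sum_{j\ne i} 2\,e\!\left(G[I_i\cup I_j]\right)\ \le\ 2(m-1)(2N)^{2-1/t}\ \le\ 8\,m\,N^{2-1/t}.
\]

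Now set $D:=16\,m\,N^{1-1/t}$ and $I_i^{*}:=\{v\in I_i:\deg_{G^{*}}(v)\le D\}$. By Markov's inequality and the displayed estimate, $|I_i\setminus I_i^{*}|\le 8mN^{2-1/t}/D=N/2$, so $|I_i^{*}|\ge N/2$ for each $i$. I would then process $i=1,\dots,m$ in order, maintaining an independent set $J\subseteq W$ each of whose vertices has $G^{*}$-degree at most $D$ (starting from $J=\emptyset$): at step $i$, while $|J\cap I_i|<s$, pick any vertex of $I_i^{*}\setminus\bigl(J\cup N_{G^{*}}(J)\bigr)$ and add it to $J$. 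Only vertices of $I_i$ are added at step $i$, so at most $s$ are added per step and $|J|\le ms$ throughout; hence $|N_{G^{*}}(J)|\le |J|D\le msD$ and
\[
\bigl|I_i^{*}\setminus(J\cup N_{G^{*}}(J))\bigr|\ \ge\ \tfrac{N}{2}-|J|-|N_{G^{*}}(J)|\ \ge\ \tfrac{N}{2}-ms-msD\ \ge\ 1,
\]
the last step being valid once $N\ge(128\,m^{2}s)^{t}$, which forces both $ms$ and $msD$ to be at most $N/8$. Thus the loop always proceeds and halts with $|J\cap I_i|\ge s$; since $J$ only grows this remains true, and each added vertex is non-adjacent to the current $J$, so $J$ stays independent. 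Taking $\N(s,t,m):=\max\!\bigl(n_t,\ \lceil(128\,m^{2}s)^{t}\rceil\bigr)=\Oh_t\bigl((sm^{2})^{t}\bigr)$ and $I:=J$ finishes the proof.

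The one genuinely delicate point is squeezing the degree bound down to $D=\Oh(mN^{1-1/t})$. The lazy estimate — apply \cref{cor:kst} to $G^{*}$ itself — yields only $e(G^{*})\le|W|^{2-1/t}\le(mN)^{2-1/t}$, hence $D=\Oh(m^{2-1/t}N^{1-1/t})$, which would blow the final bound up to $\Oh_t((sm^{3})^{t})$; splitting the count into the $\binom{m}{2}$ bipartite pieces $G[I_i\cup I_j]$ (each of which genuinely contains no $K_{t,t}$-subgraph) is exactly what recovers the claimed $m^{2t}$ dependence. I expect no other obstacle: in particular, allowing the $I_i$ to overlap costs nothing, since the greedy never needs disjointness and a vertex drawn from $I_i^{*}$ also counts toward $J\cap I_j$ for every $j$ with $v\in I_j$ — whereas a reduction to disjoint sets would have introduced a spurious factor $m$.
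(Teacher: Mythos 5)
Your proof is correct, and it takes a genuinely different route from the paper while relying on the same key tool, the K\"ov\'ari--S\'os--Tur\'an bound applied to the bipartite pieces $G[I_i\cup I_j]$. The paper argues by alteration: it randomly samples a $2s$-element subset $X_i$ of each $I_i$, computes that the expected number of edges of $G[\bigcup X_i]$ is $|E(G)|\cdot 4s^2/n^2$, and concludes that either some outcome has at most $s$ edges (delete one endpoint per edge and win), or else $|E(G)|>n^2/(4s)$, whence by averaging over the $\binom m2$ bipartite pairs some $G[I_i\cup I_j]$ has too many edges to be $K_{t,t}$-subgraph-free, a contradiction for $n\ge(8sm(m-1))^t$. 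You instead sum the pairwise KST bounds to control the total degree of $I_i$ inside $W=\bigcup_j I_j$, invoke Markov to find $\ge N/2$ vertices in each $I_i$ of $G^*$-degree $O(mN^{1-1/t})$, and then run a direct greedy, adding a fresh low-degree vertex of $I_i^*$ avoiding $J\cup N_{G^*}(J)$ until $|J\cap I_i|\ge s$; the degree cap and $|J|\le ms$ keep the forbidden set small enough that the greedy never stalls once $N\ge(128m^2s)^t$. Both approaches yield $\Oh_t((sm^2)^t)$, and the crucial shared insight is to apply KST pair-by-pair (where bipartiteness upgrades induced-$K_{t,t}$-freeness to subgraph-$K_{t,t}$-freeness) rather than globally; your approach buys a more explicit, deterministic construction, while the paper's is shorter. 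One tiny slip in your closing remark: the ``lazy estimate'' of applying \cref{cor:kst} directly to $G^*$ is not merely weaker, it is unavailable, because $G^*$ need not be bipartite and therefore induced-$K_{t,t}$-freeness does not forbid $K_{t,t}$ subgraphs of $G^*$; this does not affect your actual proof, which only ever applies KST to the bipartite pieces. Also, the factor $2$ in $\sum_{v\in I_i}\deg_{G^*}(v)\le\sum_{j\ne i}2e(G[I_i\cup I_j])$ is superfluous (each edge of $G[I_i\cup I_j]$ is counted once from the $I_i$ side), but harmless.
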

 
\begin{proof}
    Fix $t$ and define 
    \[\N(s,t,m) = \max \left( n_t, \lceil(8 sm(m-1))^t\rceil \right),\] where $n_t$ is the constant from \cref{cor:kst}.
    Note that $\N(s,t,m) = \Oh_t( (sm^2)^t )$.

    Let $\widetilde{G}$ be a $K_{t,t}$-free graph, and let $I_1,\ldots,I_m$ be independent sets in $\widetilde{G}$, each of size at least\/ $\N(s,t,m)$.
    By possibly removing some elements from these sets, we may assume that the size of each set is equal to $n$, where $n = \N(s,t,m)$.
    Aiming for a contradiction, suppose that we cannot select subsets as in the statement of the lemma.
    
    Let $G$ be the subgraph of $\widetilde{G}$ induced by $\bigcup_{i = 1}^m I_i$.
    For each $i \in [m]$, randomly select a subset $X_i$ of $I_i$ of size $2s$, uniformly and independently.
    The expected number of edges in $G[X_1 \cup \ldots \cup X_m]$ is $|E(G)| \frac{4s^2}{n^2}$, since any edge $e=uv\in E(G)$ belongs to this induced subgraph with probability $4s^2/n^2$: indeed, if we denote by $i,j\in [m]$ the two indices such that $u\in I_i$ and $v\in I_j$, then the probability of each of the two independent events $u\in X_i$ and $v\in X_j$ equals $2s/n$.

    If there is a choice in which the number of edges is at most $s$, then by removing one endpoint of each such edge we get sets $U_i \subseteq I_i$, each of size at least $2s-s=s$, such that their union is independent.
    Thus we may assume that this is not the case, implying that the random variable counting the number of these edges is always at least $s+1$ and thus 
    \[
        |E(G)| \frac{4s^2}{n^2} \geq s+1 > s,
    \]
    showing that $|E(G)| > \frac{n^2}{4s}$.
    By averaging this implies that there are $1 \leq i < j \leq m$ such that the graph $G_{i,j} := G[I_i \cup I_j]$ has more than $\frac{n^2}{2sm(m-1)}$ edges.
    As $|V(G_{i,j})| = 2n \geq n_t$ and $G_{i,j}$ is bipartite and thus it cannot contain a $K_{t,t}$ as a subgraph, by \cref{cor:kst} we obtain that the number of edges in $G_{i,j}$ is at most $(2n)^{2-1/t}<4 n^{2-1/t}$.
    Consequently,
    \[
    \frac{n^2}{2sm(m-1)}<4 n^{2-1/t}\,,
    \]
    which implies that $n<(8 sm(m-1))^t \le \N(s,t,m)$.
    This contradiction completes the proof.    
\end{proof}

\section{Proof of \cref{thm:biclique}}\label{sec:proof}

We now prove \cref{thm:biclique}, which we restate for convenience.

\thmbiclique*

\begin{proof}
The proof follows the same strategy as the proof of \cite[Theorem 1.1]{MR4906164}, with two lemmas derived from Ramsey's theorem, namely, \cite[Lemma 3.1]{MR4906164} and \cite[Lemma 3.2]{MR4906164}, replaced with \cref{lem:extract-induced-matching,lem:extract-independent-set}, respectively.

Let $\M(\cdot,\cdot)$ and $\N(\cdot,\cdot,\cdot)$ be as claimed in \cref{lem:extract-induced-matching,lem:extract-independent-set}, respectively.
Let
\begin{align*}
\C(\mu,t)&=\N(\M(\mu,t),\:t,\:\M(\mu,t))\text{,} \\
\K(\mu,t)&=2\cdot\M(\mu,t)+\mu\cdot\C(\mu,t)\text{.}
\end{align*}
Since $\M(s,t) = \Oh_t(s^t)$ and $\N(s,t,m) = \Oh_t( (sm^2)^t)$, there exist functions $f(t)$ and $g(t)$ such that $\M(s,t)\le f(t)s^t$ and $\N(s,t,m) \le g(t)(sm^2)^t$ for all positive integer arguments $s,{}t$ (and $s,{}t,{}m$, respectively).
The functions $\N$ and $\M$ are nondecreasing in each of the arguments and therefore
\[
\C(\mu,t)=\N(\M(\mu,t),\:t,\:\M(\mu,t)) \le g(t)\cdot (\M(\mu,t))^{3t}\le  g(t)\cdot (f(t))^{3t}\cdot \mu^{3t^2}
\]
and consequently
\[
\K(\mu,t)=2\cdot\M(\mu,t)+\mu\cdot\C(\mu,t)
\le 2\cdot f(t)\mu^t+ g(t)\cdot (f(t))^{3t}\cdot \mu^{3t^2+1}  = \Oh_t( \mu^{3t^2+1})\text{.}
\]
The rest of the proof is the same as the corresponding part of the proof of~\cite[Theorem 1.1]{MR4906164}.
For completeness, we explain the main steps of the proof, but do not reproduce the proofs of the claims below since all the claims are from~\cite{MR4906164}.

Let $G$ be a $K_{t,t}$-free graph, and let $\cT=(T,\beta)$ be a tree decomposition of $G$ with $\mu(\cT)\leq\mu$.
We aim to show that $\treealpha(G)<\K(\mu,t)$.
Let $S$ be a maximum independent set in $G$.

\begin{claim}[Claim 3.3 in \cite{MR4906164}]
\label{clm:max-ind-set-helper}
For every node\/ $x$ of\/ $T$, it holds that\/ $\alpha(\beta(x)\setminus S)<\M(\mu,t)$.
\end{claim}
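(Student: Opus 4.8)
The plan is to argue by contradiction, and the whole point is to convert an oversized independent set in a bag into an induced matching that certifies $\mu(\cT)>\mu$. Suppose $\alpha(\beta(x)\setminus S)\ge\M(\mu,t)$ for some node $x$ of $T$, and fix an independent set $A\subseteq\beta(x)\setminus S$ with $|A|=\M(\mu,t)$. Set $H:=G[A\cup S]$. Since $A$ and $S$ are disjoint independent sets of $G$, the graph $H$ is bipartite with bipartition $(A,S)$, and its edges are precisely the edges of $G$ with one end in $A$ and one end in $S$; in particular every edge of $H$ has an endpoint in $A\subseteq\beta(x)$. Moreover $H$ is an induced subgraph of $G$, so it is again $K_{t,t}$-free.

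The one genuine step is to extract a large matching in $H$ from the maximality of $S$. Because $H$ is an induced subgraph of $G$ and $S$ is a maximum independent set of $G$, we have $|S|\le\alpha(H)\le\alpha(G)=|S|$, so $\alpha(H)=|S|$. Applying König's theorem to the bipartite graph $H$ (equivalently, using that for a bipartite graph a maximum matching has size $|V(H)|-\alpha(H)$ by König's theorem together with the Gallai identity), we obtain a matching in $H$ of size $|V(H)|-|S|=(|A|+|S|)-|S|=|A|=\M(\mu,t)$.

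Now I would invoke \cref{lem:extract-induced-matching} with $s=\mu$: the graph $H$ is bipartite, $K_{t,t}$-free, and has a matching of size $\M(\mu,t)$, hence it contains an induced matching $M'$ of size $\mu+1$. Since $H$ is an induced subgraph of $G$, $M'$ is an induced matching of $G$ as well, and each of its edges meets $A\subseteq\beta(x)$; therefore $\mu(\cT)\ge\mu(G,\beta(x))\ge|M'|=\mu+1>\mu$, contradicting $\mu(\cT)\le\mu$. This contradiction yields $\alpha(\beta(x)\setminus S)<\M(\mu,t)$.

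I expect the only real obstacle to be the matching-extraction step: one must rule out that $A$ attaches to $S$ through a small set of vertices, and this is exactly what the König/Gallai identity buys us once we know $\alpha(H)=|S|$ (a greedy argument could produce a matching of size only $1$ in the worst case). Everything else is bookkeeping, the key being to work with the induced subgraph $H=G[A\cup S]$ — rather than with some arbitrary subgraph carrying the matching — so that the induced matching produced by \cref{lem:extract-induced-matching} is genuinely induced in $G$ and, being bipartite with one side inside $\beta(x)$, sits correctly in the bag. All quantitative content is delegated to \cref{lem:extract-induced-matching}, which is why the bound $\M(\mu,t)$ appears unchanged in the statement.
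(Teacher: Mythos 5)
Your proof is correct, and it is the natural way to establish this claim. The two genuine ingredients are present and handled cleanly: (i) because $A$ and $S$ are disjoint independent sets, $H=G[A\cup S]$ is bipartite with all edges going between $A$ and $S$, and it inherits $K_{t,t}$-freeness from $G$ by virtue of being an induced subgraph (for bipartite graphs there is no gap between containing $K_{t,t}$ as a subgraph and containing it induced, so \cref{lem:extract-induced-matching} applies without friction); (ii) the K\H{o}nig/Gallai computation $|V(H)|-\alpha(H)=|A|$ is exactly what converts the maximality of $S$ into a matching saturating $A$ --- you are right that a greedy argument alone would not suffice, since a priori all of $A$ could attach to a tiny part of $S$, and this is precisely what maximality of $S$ forbids. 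The resulting induced matching lands with one endpoint per edge in $A\subseteq\beta(x)$, giving $\mu(G,\beta(x))\ge\mu+1$ and the desired contradiction. This matches the strategy of Claim 3.3 in the cited paper (which the present note does not reproduce); the only difference in the note is that the quantitative lemma it feeds into, \cref{lem:extract-induced-matching}, now carries a polynomial bound.
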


Let a vertex $v$ of $G$ be called \emph{light} if $\alpha(N(v))<\C(\mu,t)$ and \emph{heavy} otherwise.
Let $S_\ell$ and $S_h$ be the sets of light  and heavy vertices in $S$, respectively.

\begin{claim}[Claim 3.4 in \cite{MR4906164}]
\label{clm:light}
For every node\/ $x$ of\/ $T$, it holds that\/ $\alpha(N(\beta(x)\cap S_\ell))<\mu\cdot\C(\mu,t)$.
\end{claim}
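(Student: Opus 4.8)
The plan is to pass to a maximum independent set $I$ in $N(\beta(x)\cap S_\ell)$ and then show that $I$ is dominated by at most $\mu$ of the light vertices of the bag $\beta(x)$; since every light vertex $v$ satisfies $\alpha(N(v))<\C(\mu,t)$, summing over these vertices yields $|I|<\mu\cdot\C(\mu,t)$, which is exactly the desired bound $\alpha(N(\beta(x)\cap S_\ell))<\mu\cdot\C(\mu,t)$. That at most $\mu$ vertices suffice will come from the induced matching number bound $\mu(\cT)\le\mu$, extracted via a minimality argument that produces a ``private neighbor'' induced matching.

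In detail, write $A=\beta(x)\cap S_\ell$ and let $I$ be a maximum independent set of $G[N(A)]$, so $|I|=\alpha(N(A))$. First I would pick an inclusion-minimal set $A'\subseteq A$ with $I\subseteq N(A')$, which exists since $I\subseteq N(A)$. For each $v\in A'$, minimality produces a vertex $w_v\in I$ whose only neighbor in $A'$ is $v$ (otherwise every vertex of $I$ would still have a neighbor in $A'\setminus\{v\}$, contradicting the minimality of $A'$). I would then check that $M:=\{vw_v:v\in A'\}$ is an induced matching of $G$: the set $\{v:v\in A'\}$ is independent (it lies in $S$), the set $\{w_v:v\in A'\}$ is independent (it lies in $I$), the edges are pairwise vertex-disjoint since distinct $w_v$'s have distinct unique $A'$-neighbors and no $w_v$ lies in $S$ (each is adjacent to some vertex of $A\subseteq S$), and for $v\neq v'$ there is no edge between $\{v,w_v\}$ and $\{v',w_{v'}\}$ because $v$ is not adjacent to $w_{v'}$ and $v'$ is not adjacent to $w_v$. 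Every edge of $M$ meets $\beta(x)$ in its endpoint from $A'\subseteq\beta(x)$, so $|A'|=|M|\le\mu(G,\beta(x))\le\mu(\cT)\le\mu$. Finally, $I\subseteq N(A')=\bigcup_{v\in A'}N(v)$ gives $|I|\le\sum_{v\in A'}|N(v)\cap I|$, and since each $N(v)\cap I$ is an independent set contained in $N(v)$ and each $v\in A'\subseteq S_\ell$ is light, we have $|N(v)\cap I|\le\alpha(N(v))<\C(\mu,t)$; hence $\alpha(N(A))=|I|<|A'|\cdot\C(\mu,t)\le\mu\cdot\C(\mu,t)$.

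The step I expect to need the most care --- and the reason to pass to the independent set $I$ at the very beginning instead of running the minimality argument on $N(A)$ directly --- is verifying that $M$ has no edge between two of its pairs: if the private neighbors $w_v$ were allowed to be arbitrary vertices of $N(A)$, two of them could be adjacent and $M$ would fail to be an induced matching. Forcing all the $w_v$ into one independent set $I$ removes this obstacle automatically, and because $I$ is a \emph{maximum} independent set of $G[N(A)]$, nothing is lost in the final estimate.
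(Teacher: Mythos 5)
Your proof is correct. The key verifications all check out: minimality of $A'$ forces each $v\in A'$ to have a private neighbor $w_v\in I$; the set $\{v w_v : v\in A'\}$ is an induced matching because the $v$'s lie in the independent set $S$, the $w_v$'s lie in the independent set $I$, $N(A)\cap S=\emptyset$ separates the two sides, and the privacy of each $w_v$ in $A'$ kills all cross-edges; every edge meets $\beta(x)$, giving $|A'|\le\mu(\cT)\le\mu$; and the covering $I\subseteq\bigcup_{v\in A'}N(v)$ combined with lightness yields the bound (the degenerate case $A'=\emptyset$ forces $I=\emptyset$ and is trivial). This is essentially the same strategy as the proof of Claim~3.4 in Abrishami et al.\ (which this paper cites without reproducing): both rest on the two facts that $\mu(\cT)\le\mu$ caps any induced matching meeting a bag, and that light vertices have $\alpha(N(v))<\C(\mu,t)$. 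The one cosmetic difference is the extraction mechanism: the original runs a greedy iteration --- repeatedly pick an uncovered vertex of $I$, match it to a light neighbor in the bag, and delete at most $\C(\mu,t)-1$ covered vertices, deriving a contradiction if $|I|\ge\mu\cdot\C(\mu,t)$ --- whereas you pass to an inclusion-minimal dominating subset $A'$ and harvest private neighbors in one shot. The two are dual formulations of the same argument; yours is arguably a bit cleaner as a direct (non-contradiction) proof, at the cost of having to verify carefully that the private-neighbor system is an induced matching rather than letting the greedy bookkeeping enforce it.
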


\begin{claim}[Claim 3.5 in \cite{MR4906164}]
\label{clm:heavy}
For every node\/ $x$ of\/ $T$, it holds that\/ $\size{\beta(x)\cap S_h}<\M(\mu,t)$.
\end{claim}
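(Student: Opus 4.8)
The plan is to prove \cref{clm:heavy} by contradiction, turning a large collection of heavy vertices lying in one bag into an induced matching of size $\mu+1$ all of whose edges meet that bag. So suppose $|\beta(x)\cap S_h|\ge\M(\mu,t)$, and fix $m:=\M(\mu,t)$ distinct heavy vertices $v_1,\dots,v_m\in\beta(x)\cap S_h$. Since each $v_i$ is heavy, $\alpha(N(v_i))\ge\C(\mu,t)=\N(\M(\mu,t),t,\M(\mu,t))$, so we may pick an independent set $I_i\subseteq N(v_i)$ with $|I_i|\ge\N(m,t,m)$. Feeding $I_1,\dots,I_m$ into \cref{lem:extract-independent-set} with $s=m$ and $m$ sets, we obtain an independent set $I$ in $G$ with $|I\cap I_i|\ge m$ for all $i\in[m]$; as $I_i\subseteq N(v_i)$, this means each $v_i$ has at least $m$ neighbours inside $I$. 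In the bipartite auxiliary graph on parts $\{v_1,\dots,v_m\}$ and $I$ (with adjacency inherited from $G$), every vertex of the first part has degree at least $m$, so Hall's condition holds and there is a matching $\{v_iw_i:i\in[m]\}$ saturating $\{v_1,\dots,v_m\}$, with the $w_i\in I$ pairwise distinct and each $w_i\in N(v_i)$.

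Now put $W=\{w_1,\dots,w_m\}$ and consider $\widetilde G:=G\bigl[\{v_1,\dots,v_m\}\cup W\bigr]$. The set $\{v_1,\dots,v_m\}$ is independent (it lies in $S$) and $W$ is independent (it lies in $I$), and the two are disjoint, since each $w_i\in N(v_i)$ with $v_i\in S$ cannot itself lie in $S$; hence $\widetilde G$ is bipartite with this bipartition, it is $K_{t,t}$-free as an induced subgraph of $G$, and it contains the matching $\{v_iw_i\}$ of size $m=\M(\mu,t)$. By \cref{lem:extract-induced-matching} applied with $s=\mu$, the graph $\widetilde G$ contains either an induced $K_{t,t}$ (impossible) or an induced matching $M^\ast$ with $\mu+1$ edges. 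Every edge of $M^\ast$ runs between the two sides of $\widetilde G$, hence has an endpoint among $v_1,\dots,v_m\subseteq\beta(x)$; moreover $M^\ast$ is an induced matching in $G$ as well, being one in the induced subgraph $\widetilde G$. Thus $\mu(G,\beta(x))\ge\mu+1$, contradicting $\mu(\cT)\le\mu$, and therefore $|\beta(x)\cap S_h|<\M(\mu,t)$.

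The bookkeeping — verifying Hall's condition and the disjointness of the two parts — is routine; the real content is the choice of a two-stage extraction and the way the parameters are threaded through it. First one pulls a single independent set $I$ out of all the neighbourhoods $N(v_i)$ simultaneously, so that many heavy bag-vertices all acquire many neighbours in one common independent set; this is precisely what forces a large matching into a bipartite induced subgraph. Then one extracts from that matching an induced matching which, by bipartiteness, automatically has an endpoint in $\beta(x)$ on every edge. The sizes must line up exactly — subsets of size $m$ taken from $m$ independent sets, followed by an induced matching of size $\mu+1$ — which is what the definitions $\C(\mu,t)=\N(\M(\mu,t),t,\M(\mu,t))$ and $m=\M(\mu,t)$ are engineered to provide, and I would pin down those parameter choices before writing anything else.
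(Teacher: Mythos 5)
Your proof is correct and matches the original argument for this claim in~\cite{MR4906164}: you extract a single large independent set $I$ meeting each $N(v_i)$ in $\ge \M(\mu,t)$ vertices via \Cref{lem:extract-independent-set}, use Hall's theorem to find a matching of size $\M(\mu,t)$ from the heavy bag vertices into $I$, and then apply \Cref{lem:extract-induced-matching} to the resulting bipartite induced subgraph to produce an induced matching of $\mu+1$ edges each meeting $\beta(x)$, contradicting $\mu(\cT)\le\mu$. The parameter choices $s=m=\M(\mu,t)$ in $\N$ and $s=\mu$ in $\M$ are exactly what the definition $\C(\mu,t)=\N(\M(\mu,t),t,\M(\mu,t))$ is engineered to accommodate.
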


Recall that for a vertex $v$ of $G$, the subgraph of $T$ induced by the nodes that contain $v$ in their bags is denoted by $T_v$; since $\cT$ is a tree decomposition, $T_v$ is a nonempty tree.
We now construct a tree decomposition $\cT'=(T',\beta')$ of $G$ as follows.
\begin{itemize}
\item The tree $T'$ is obtained from $T$ by adding, for every $s\in S_\ell$, a new leaf node $y_s$ adjacent to some node $x_s$ of $T_s$.
\item For every node $x$ of $T$, we set $\beta'(x)=(\beta(x) \setminus S_\ell)\cup N(\beta(x)\cap S_\ell)$.
\item For every vertex $s\in S_\ell$, we set $\beta'(y_s)=N[s]$.
\end{itemize}

\begin{claim}[Claim 3.6 in \cite{MR4906164}]
\label{clm:treedec}
$\cT'$ is a tree decomposition of\/ $G$.
\end{claim}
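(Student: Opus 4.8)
The plan is to check the defining axioms of a tree decomposition directly for $\cT'=(T',\beta')$. First, $T'$ is a tree: it is obtained from the tree $T$ by attaching, for each $s\in S_\ell$, one new pendant leaf $y_s$ to a node $x_s$ lying in the (nonempty) subtree $T_s$, and attaching pendant leaves to a tree yields a tree.

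Next I would verify the edge axiom. Let $uv\in E(G)$. Since $S_\ell\subseteq S$ is independent, at most one of $u,v$ lies in $S_\ell$. If one of them, say $u$, is in $S_\ell$, then $v\in N(u)$ and hence $\{u,v\}\subseteq N[u]=\beta'(y_u)$. Otherwise both $u,v$ avoid $S_\ell$, so taking a node $x\in V(T)$ with $\{u,v\}\subseteq\beta(x)$ (available since $\cT$ is a tree decomposition) we get $\{u,v\}\subseteq\beta(x)\setminus S_\ell\subseteq\beta'(x)$. Thus every edge of $G$ is covered by a bag of $\cT'$.

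The heart of the argument is the vertex axiom. Fix $v\in V(G)$ and let $B_v$ be the set of nodes of $T'$ whose $\beta'$-bag contains $v$; I must show $B_v$ is a nonempty subtree of $T'$. If $v\in S_\ell$, then $v$ survives no bag $\beta(x)\setminus S_\ell$, and since $v$ has no neighbour in the independent set $S$ it also lies in no set $N(\beta(x)\cap S_\ell)$; moreover $v\in\beta'(y_s)=N[s]$ only when $s=v$. Hence $B_v=\{y_v\}$, a single node. If $v\notin S_\ell$, let $A=N(v)\cap S_\ell$ be the set of light neighbours of $v$ that were selected into $S$; unfolding the definition of $\beta'$ shows that $B_v=V(T_v)\cup\bigcup_{s\in A}\bigl(V(T_s)\cup\{y_s\}\bigr)$, because $v$ re-enters an old bag $\beta'(x)$ exactly when some $s\in A$ lies in $\beta(x)$, and $v\in\beta'(y_s)$ exactly when $s\in A$. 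It then remains to certify connectivity: for each $s\in A$ the adjacency $vs\in E(G)$ forces a bag of $\cT$ containing both $v$ and $s$, so $V(T_v)\cap V(T_s)\neq\emptyset$, whence $T_v\cup T_s$ spans a connected subtree of $T$; attaching the leaf $y_s$ at $x_s\in V(T_s)$ keeps it connected; and all these pieces share the nonempty part $V(T_v)$, so their union $B_v$ is connected, hence a subtree of $T'$.

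I expect the only genuinely delicate point to be this last gluing step: one has to notice that a single light neighbour $s$ of $v$ drags the entire subtree $T_s$ into $B_v$ through the term $N(\beta(x)\cap S_\ell)$ in the definition of $\beta'(x)$, and that connectivity is nevertheless preserved precisely because $v$ and $s$, being adjacent in $G$, must co-occur in a bag of the original decomposition $\cT$. Everything else is a routine unwinding of the definitions.
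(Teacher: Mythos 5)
Your verification is correct and complete: the construction of $T'$, the edge-coverage case analysis using the independence of $S_\ell\subseteq S$, the exact identification of $B_v$ in both cases, and the gluing of $T_v$ with the subtrees $T_s$ (for $s\in N(v)\cap S_\ell$) via the bag witnessing the edge $vs$ are all sound. This is the same direct axiom-checking argument as in the cited source (the paper itself defers the proof to Claim 3.6 of \cite{MR4906164} rather than reproducing it).
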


The final claim establishes a bound on the independence number of $\cT'$ and is proved using \cref{clm:max-ind-set-helper,clm:light,clm:heavy}.

\begin{claim}[Claim 3.7 in \cite{MR4906164}]\label{clm:treealpha}
$\alpha(\cT')<\K(\mu,t)$.
\end{claim}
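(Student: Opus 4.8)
The plan is to bound $\alpha(G[\beta'(z)])$ separately for every node $z$ of $T'$ and then take the maximum over $z$, so it suffices to treat the two types of nodes of $T'$: the leaves $y_s$ with $s\in S_\ell$ that were attached to $T$, and the nodes $x$ inherited from $T$. The leaves are immediate: for $s\in S_\ell$ we have $\beta'(y_s)=N[s]=\{s\}\cup N(s)$, hence $\alpha(G[\beta'(y_s)])\le 1+\alpha(N(s))$, and since $s$ is light this is at most $\C(\mu,t)$, which is strictly smaller than $\K(\mu,t)=2\cdot\M(\mu,t)+\mu\cdot\C(\mu,t)$ because $\M(\mu,t)\ge 1$.

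For an inherited node $x$, recall that $\beta'(x)=(\beta(x)\setminus S_\ell)\cup N(\beta(x)\cap S_\ell)$. Since the independence number is subadditive under unions of vertex subsets,
\[
\alpha\bigl(G[\beta'(x)]\bigr)\le \alpha\bigl(G[\beta(x)\setminus S_\ell]\bigr)+\alpha\bigl(N(\beta(x)\cap S_\ell)\bigr),
\]
and \cref{clm:light} bounds the second summand by $\mu\cdot\C(\mu,t)-1$. For the first summand I would split $\beta(x)\setminus S_\ell$ into the disjoint union of $\beta(x)\setminus S$ and $\beta(x)\cap S_h$ (a genuine partition, since $S=S_\ell\cup S_h$), obtaining $\alpha(G[\beta(x)\setminus S_\ell])\le \alpha(G[\beta(x)\setminus S])+\size{\beta(x)\cap S_h}$; then \cref{clm:max-ind-set-helper} bounds the first term by $\M(\mu,t)-1$ and \cref{clm:heavy} bounds the second by $\M(\mu,t)-1$. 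Adding everything up gives $\alpha(G[\beta'(x)])\le 2\cdot\M(\mu,t)+\mu\cdot\C(\mu,t)-3<\K(\mu,t)$.

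Taking the maximum over all nodes of $T'$ then yields $\alpha(\cT')<\K(\mu,t)$. I do not anticipate a real obstacle here: once \cref{clm:max-ind-set-helper,clm:light,clm:heavy} are in hand, the argument is pure bookkeeping resting on two elementary facts — that $\alpha$ is subadditive under unions of vertex subsets, and that an independent set has at most as many vertices as the ground set — together with the observation that the strictness of those three claims is precisely what leaves room for the final strict inequality. The only things worth double-checking are that the two-level decomposition of $\beta'(x)$ really does cover it and that one correctly attributes each piece to the right claim.
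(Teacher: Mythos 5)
Your proof is correct and follows what is essentially the intended route: partition $\beta'(x)$ into $\beta(x)\setminus S$, $\beta(x)\cap S_h$, and $N(\beta(x)\cap S_\ell)$, bound the independence number of each piece via \cref{clm:max-ind-set-helper}, \cref{clm:heavy}, and \cref{clm:light} respectively, and handle the new leaf bags $N[s]$ directly from the lightness of $s$. The arithmetic checks out (the three strict inequalities give $\alpha(G[\beta'(x)])\le 2\M(\mu,t)+\mu\C(\mu,t)-3<\K(\mu,t)$, and the leaf case is dominated by the inherited case), so this is the same argument as in the paper.
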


Now, the theorem follows directly from \cref{clm:treedec,clm:treealpha}.
\end{proof}

\section{Conclusion and open problems}\label{sec:open}

An obvious way to improve our \cref{thm:biclique} would be to show that if $G$ is $K_{t,t}$-free, then $\treealpha(G)$ is bounded by a function that is polynomial in \emph{both} $\yolov(G)$ and $t$. However, this is not the case, as shown in the next lemma.
Here, by $tK_2$ we mean an induced matching with $t$ edges. Clearly, if $G$ is $tK_2$-free, then $\yolov(G) < t$.

\begin{lemma}\label{lem:nopoly}
For any positive integer $t$ there exists a graph $G$ that is $K_{t,t}$-free and $tK_2$-free, but $\treealpha(G) = 2^{\Omega(t)}$.
\end{lemma}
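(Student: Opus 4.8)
The statement asks for a family $G=G_t$, so I would split the work into three independent verifications: (a) $G$ has no induced $K_{t,t}$; (b) $G$ has no induced matching with $t$ edges (equivalently $G$ is $tK_2$-free, which in particular gives $\yolov(G)\le t-1$); and (c) $\treealpha(G)=2^{\Omega(t)}$. For (c) the natural tool is the fact recalled in \cref{sec:prelim}: some bag of every tree decomposition is a balanced separator, so $\treealpha(G)\ge\min\{\alpha(G[S]):S\text{ a balanced separator of }G\}$, and, more robustly, a bramble-style argument — a family of pairwise-touching connected subgraphs every transversal of which induces an independent set of size $2^{\Omega(t)}$. Alternatively, since $\treealpha(G)\le\tw(G)+1$ while $\tw(G)$ is bounded by a polynomial in $\treealpha(G)$ whenever $\omega(G)$ is bounded, it suffices to build $G$ with $\omega(G)=\Oh(1)$ (or $\Oh(t)$) and then lower-bound $\tw(G)$ by $2^{\Omega(t)}$, e.g.\ through a large grid minor or an expansion estimate.

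The heart of the matter is that (a) and (b) pull against each other — which is exactly why \cref{thm:biclique} needs the $K_{t,t}$-free hypothesis. Forbidding $K_{t,t}$ makes a graph ``sparse between independent sets,'' which tends to create huge induced matchings, while forbidding $tK_2$ forces local density (cliques or bicliques), which threatens (a); and a graph that is simultaneously biclique-free and clique-poor has small $\treealpha$ by \cref{thm:biclique} itself, killing (c). So the obvious candidates all fail for a clean reason: complete bipartite graphs, their independent-set blow-ups, and substitution/join products have $\treealpha=2^{\Omega(t)}$ but contain $K_{t,t}$; sparse random graphs, hypercubes, Kneser graphs $\mathrm{KG}(n,\lfloor n/2\rfloor-c)$, incidence graphs of generalized polygons, and bounded-degree expanders are essentially biclique-free but carry induced matchings of size $2^{\Omega(t)}$; chordal, split and interval graphs (clique tree decompositions) satisfy (a) and (b) but have $\treealpha=\Oh(1)$; and any graph of the form ``a large independent set together with an $\Oh(t)$-size remainder'' has $\treealpha=\Oh(t)$. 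Hence $G$ must use cliques essentially while keeping both forbidden thresholds linear in $t$ and its treewidth exponential. Two routes seem plausible: (1) a recursive construction $G_{k+1}=\Phi(G_k)$ with an operation $\Phi$ that at least doubles $\treealpha$ but raises the least forbidden biclique and the least forbidden induced matching by only $\Oh(1)$ (and keeps $\omega$ bounded), whence one takes $t=\Theta(k)$; or (2) an explicit extremal graph — for instance the complement of a carefully chosen $C_4$-free graph, so that $G$ is automatically $2K_2$-free and thus $\yolov(G)\le 1$ — designed so that $\alpha(G)=2^{\Omega(t)}$, $\omega(G)$ is small, and every balanced separator is forced to meet a $2^{\Omega(t)}$-sized independent set because the large independent set is ``well-linked'' through a $2K_2$-free gadget. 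One should exploit the slack that \cref{thm:biclique} permits $\treealpha$ as large as $2^{\Theta(t^2)}$ even when $\yolov(G)$ is a constant.

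Once such a $G$ is fixed, I expect (a) to be a short direct count bounding the size of any induced biclique straight from the construction, and (b) to follow from the standard observation that the vertex set of a maximal induced matching dominates every edge, so a size bound on it propagates; the constant in the exponent $2^{\Omega(t)}$ falls out of whatever doubling ratio or extremal function drives the construction. The main obstacle — both in route (1), namely designing $\Phi$ so that the recursive step does not create a $K_{t,t}$ (this is precisely what rules out the naive products and blow-ups), and in route (2), namely proving the ``well-linkedness'' that forces every low-independence tree decomposition to fail — is establishing the lower bound in (c) against \emph{all} tree decompositions; everything else should be routine given the construction.
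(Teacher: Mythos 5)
Your proposal never actually produces a graph $G$ or verifies any of the three properties; it is a survey of candidate strategies together with a catalogue of why the ``obvious'' candidates should fail. That is not a proof, and worse, the catalogue steers you away from the construction that actually works. You write off random graphs (``sparse random graphs, hypercubes, \dots are essentially biclique-free but carry induced matchings of size $2^{\Omega(t)}$''), but the paper's $G$ is precisely a \emph{dense} random bipartite graph: take $n=\lfloor 2^{t/3}\rfloor$, let $G$ be bipartite with parts $A,B$ of size $n$ and each cross-pair an edge independently with probability $1/2$. Union bounds give that with high probability $G$ contains no $K_{t,t}$ (probability at most $\binom{n}{t}^2 2^{-t^2}\le 2^{-t^2/3}$), no induced matching of size $t$ (probability at most $\binom{n}{t}^2 t!\,2^{-t^2}$, again $\to 0$), and — applying the $K_{t,t}$ bound to the bipartite complement — \emph{every} pair of $t$-subsets $X\subseteq A$, $Y\subseteq B$ has at least one edge between them. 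So the tension you correctly identify between (a) forbidding $K_{t,t}$ and (b) forbidding $tK_2$ is resolved not by a recursive gadget or a $2K_2$-free extremal graph, but simply by choosing $n$ small enough ($n\approx 2^{t/3}$) that the relevant union bounds all vanish simultaneously.

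The second gap is in (c). You mention balanced separators and brambles but do not commit to an argument that would actually close the bound. The paper's argument is short and direct: the third property (every $t$-by-$t$ bipartite pair is joined by an edge) implies that if $S$ has fewer than $n-2t$ vertices, then $V(G)\setminus S$ cannot be split into two parts $W,Z$ each of size at least $2t$ with no edge between them — a small case analysis on $|W\cap A|$, $|W\cap B|$, $|Z\cap A|$, $|Z\cap B|$ always finds two opposite $t$-subsets, hence an edge. So $G$ has no balanced separator of size less than $n-2t$, forcing some bag of every tree decomposition to have size at least $n-2t$; since $G$ is bipartite, that bag induces a subgraph with an independent set of size at least $(n-2t)/2 = 2^{\Omega(t)}$. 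You gesture at the balanced-separator fact from the preliminaries, which is the right tool, but without the ``every $t$-by-$t$ pair is joined'' property (which you do not anticipate) there is nothing to force the separator to be large. Neither of your two proposed routes supplies this; the $2K_2$-free route in particular seems very hard to push through, since a $2K_2$-free graph has a highly restricted structure that resists getting $\treealpha$ exponentially large while keeping $\omega$ small.
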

\begin{proof}
    Let $n = \lfloor 2^{t/3} \rfloor$.
    Let $G$ be a random bipartite graph with sides $A$ and $B$ of size $n$ each, and edges between sides added uniformly at random, that is, independently of each other with probability $p = 1/2$. 
    By the union bound the probability that $G$ contains a copy of $K_{t,t}$ is at most
    $$
    {n \choose t}^2 2^{-t^2} \leq 2^{-t^2/3},
    $$
which tends to $0$ as $t$ tends to infinity. The same argument applied to the bipartite complement of $G$ shows that the probability that there are subsets $X \subset A, Y \subset B$ satisfying $|X|=|Y|=t$  with no edge between them is equally tiny. 
Similarly, the probability that $G$ contains an induced matching of $t$ edges is at most
$$
{n \choose t}^2 t! 2^{-t^2} \leq 2^{t \log t -t^2/3},
$$
which also tends to $0$ as $t$ tends to infinity. Therefore, with high probability, that is, with probability that tends to $1$ as $t$ tends to infinity, $G$ is $K_{t,t}$-free and $tK_2$-free, and in addition contains at least one edge between any two subsets $X \subset A, Y \subset B$, where $|X|=|Y|=t.$ 

Fix a graph $G$ satisfying these three properties. 
The third property implies that if we remove from $G$ a set $S$ of fewer than $n-2t$ vertices, then it is impossible to partition the remaining vertices into two disjoint sets $W$ and $Z$, each of size at least $2t$, with no edge between them. 
Indeed, in such a partition the number of vertices of $A$ in the union $W \cup Z$ is larger than $2t$ and so is the number of vertices of $B$ in this union. 
If $W$ contains at least $t$ vertices of $A$ and at least $t$ vertices of $B$ then there must be an edge between $Z$ and $W$, since $Z$ contains either at least $t$ vertices of $A$ or at least $t$ vertices of $B$. 
Therefore we may assume, without loss of generality, that $|W \cap A| < t$. 
In this case $|W \cap B| >t$, hence, if $|Z \cap A| \geq t$, there is an edge between $Z$ and $W$. 
This implies that $|Z \cap A|<t$, but then the union $W \cup Z$ contains less than $2t$ vertices of $A$, which is impossible. 

By the above argument, there is no balanced separator in $G$ of size less than $n-2t$, implying that in any tree decomposition of $G$ there is a bag of size at least $n-2t$. 
Since $G$ is bipartite, the independence number of the subgraph of $G$ induced by this bag is at least $(n-2t)/2$, which is linear in $n$. 
This completes the proof of the lemma.
\end{proof}

Using similar arguments, it can be shown that the smallest positive integer $\M(s,t)$ satisfying the conclusion of \Cref{lem:extract-induced-matching} is not bounded from above by any polynomial of both $s$ and $t$, and that the smallest integer $\N(s,s,2)$ as in \Cref{lem:extract-independent-set} is exponential in $s$.

On the other hand, \cref{lem:nopoly} does not rule out the possibility of upper-bounding the induced matching treewidth on $K_{t,t}$-free graphs by a function that is polynomial in $t$ but exponential in $\treealpha(G)$.

\begin{question}\label{question2}
Is is true that for every positive integer\/ $\mu$ there exists a polynomial $\mathsf{p}_\mu$ such that every\/ $K_{t,t}$-free graph\/ $G$ with\/ $\yolov(G)\leq\mu$ satisfies\/ $\treealpha(G) \le \mathsf{p}_\mu(t)$?
\end{question}

For a graph $G$, the \emph{induced biclique number} of $G$ is the largest nonnegative integer $t$ such that $G$ contains an induced subgraph isomorphic to $K_{t,t}$.
This lower bound on tree-independence number was recently studied in several contexts (see~\cite{MR4901497,galby2024polynomialtimeapproximationschemesinduced,hilaire2025treewidthversuscliquenumber,dallard2024treewidthversuscliquenumber,MR4800641}).
Using this terminology, \Cref{question2} can be equivalently stated as follows.

\begin{question}
Is it true that for classes of  graphs with bounded induced matching treewidth, tree-independence number is bounded from above by a polynomial function of the induced biclique number?
\end{question}

Finally, let us mention that another result proved by Abrishami et al.~\cite{MR4906164} is that any class of graphs of bounded induced matching treewidth is $\chi$-bounded, i.e., the chromatic number can be upper-bounded in terms of the clique number.
However,  the bound is at least exponential with respect to the clique number.
A natural question to ask is whether these graphs are polynomially $\chi$-bounded.

\begin{question}\label{question3}
Is is true that for every positive integer\/ $\mu$ there exists a polynomial $\mathsf{p}_\mu$ such that every\/ graph\/ $G$ with\/ $\yolov(G)\leq\mu$ and clique number at most $\omega$ satisfies\/ $\chi(G) \le \mathsf{p}_\mu(\omega)$?
\end{question}

\bibliographystyle{plainurl}
\begin{sloppypar}
\bibliography{biblio}
\end{sloppypar}

\end{document}